\newtheorem{thm}{Theorem}[subsection]
\newtheorem{proposition}[thm]{Proposition}
\newtheorem{lemma}[thm]{Lemma}
\newtheorem{corollary}[thm]{Corollary}
\newtheorem{remark}[thm]{Remark}
\newcommand{\field}[1]{\mathbb{#1}}
\newcommand{\tr}{\mbox{tr}}
\newcommand{\diag}{\mbox{diag}}
\DeclareMathOperator{\rank}{rank}
\DeclareMathOperator{\End}{End}
\title{\textbf{On the integrability of the transfer dynamics of non-involutive Yang-Baxter maps}}
\author{S. Konstantinou-Rizos\\
\small \textit{Department of Applied Mathematics, University of Leeds, Leeds}\\
\small \texttt{mmskr@leeds.ac.uk}
}
\begin{document}

\maketitle\begin{abstract}
It is well known that, given a Yang-Baxter map, there is a hierarchy of commuting transfer maps, which arise out of the consideration of initial value problems. In this paper, we show that one can construct invariants of the transfer maps corresponding to the $n$-periodic initial value problem on the two-dimensional lattice, using the same generating function that is used to produce invariants of the Yang-Baxter map itself. Moreover, we discuss the Liouville integrability of these transfer maps. Finally, we consider four-dimensional Yang-Baxter maps corresponding to the nonlinear Schr\"odinger (NLS) equation and the derivative nonlinear Schr\"odinger (DNLS) equation which have recently appeared. We show that the associated transfer maps are completely integrable. 
\end{abstract}

\section{Introduction}
The quantum Yang-Baxter equation originates in the works of Yang \cite{Yang} and Baxter \cite{Baxter}. The study of the set-theoretical solutions of  the quantum Yang-Baxter (YB) equation was first proposed by Drinfel'd in 1992 \cite{Drinfel'd} and has been of great interest by many researchers. However, the first examples of such solutions appeared in \cite{Sklyanin}. Veselov in \cite{Veselov} proposed the more ellegant term ``Yang-Baxter maps'' for such solutions and, moreover, he connected them with integrable mappings \cite{Veselov, Veselov3}.

Yang-Baxter maps are closely related to several concepts of integrability as, for instance, the multidimensionally consinstent equations \cite{ABS-2004, ABS-2005, Bobenko-Suris,Frank3,Frank5,Frank4}. They are also related to integrable partial differential equations via Darboux transformations \cite{Sokor-Sasha}.

In this paper, we are interested in the transfer dynamics of Yang-Baxter maps which admit Lax representation \cite{Veselov2}. Therefore, we are dealing with non-involutive YB maps, as they are more interesting in terms of dynamics. Although, involutive maps have also useful applications \cite{Pavlos}.

The paper is organised as follows: Section 2 deals with \textit{parametric Yang-Baxter maps} and their \textit{Lax representations}, and we also give the definition of the Liouville integrability of a YB map, in order to make this paper self-contained. In section 3, we give a brief introduction to the transfer dynamics of Yang-Baxter maps and we show how can one derive the invariants of the \textit{transfer maps} corresponding to the $n$-\textit{periodic initial value problem} on the two dimensional lattice, using the same generating function of the invariants of the Yang-Baxter map. Furthermore, we discuss the Liouville integrability of the transfer dynamics for the $2n$-periodic problem. Finally, in section 4, we study the transfer dynamics of the Yang-Baxter maps related to the nonlinear Schr\"odinger equation and the derivative nonlinear Schr\"odinger equation.

\section{Preliminaries}
Let $A$ be an algebraic variety and $Y^{ij}$, $i,j=1,2,3$, $i\neq j$, be maps defined on the triple Cartesian product, $Y^{ij}\in \End (A\times A\times A)$. Particularly, we define $Y^{ij}$'s by the following relations
\begin{subequations}\label{Yij}
\begin{align}
Y^{12}(x,y,z)&=(u(x,y),v(x,y),z), \\ 
Y^{13}(x,y,z)&=(u(x,z),y,v(x,z)), \\ 
Y^{23}(x,y,z)&=(x,u(y,z),v(y,z)),
\end{align}
\end{subequations}
were $x,y,z\in A$. The map $Y^{ji}$, $i<j$, is defined as $Y^{ij}$ where we swap $u(k,l)\leftrightarrow v(l,k)$, $k,l=x,y,z$. For example, $Y^{21}(x,y,z)=(v(y,x),u(y,x),z)$.

A map $Y\in \End(A\times A)$ is called a \textit{Yang-Baxter map} if it satisfies the following equation
\begin{equation}\label{YB_eq1}
Y^{12}\circ Y^{13} \circ Y^{23}=Y^{23}\circ Y^{13} \circ Y^{12},
\end{equation}
which is the so-called \textit{Yang-Baxter equation}. Moreover, the map $Y$ is called \textit{reversible} if the composition of $Y^{ij}$ and $Y^{ji}$ is the identity map, namely
\begin{equation}\label{reversible}
Y^{ij}\circ Y^{ji}=Id.
\end{equation}

Furthermore, we use the term \textit{parametric YB map} when $u$ and $v$ are attached with parameters $a,b\in \field{C}$, namely $u=u(x,y;a,b)$
and $v=v(x,y;a,b)$, meaning that the following map
\begin{equation}
Y_{a,b}:(x,y;a,b)\mapsto (u,v;a,b)\equiv(u(x,y;a,b),v(x,y;a,b)),
\end{equation}
satisfies the \textit{parametric YB equation}
\begin{equation}\label{YB_eq}
Y^{12}_{a,b}\circ Y^{13}_{a,c} \circ Y^{23}_{b,c}=Y^{23}_{b,c}\circ Y^{13}_{a,c} \circ Y^{12}_{a,b}.
\end{equation}

\subsection{Lax representations and Liouville integrability of Yang-Baxter maps}
Following Suris and Veselov in \cite{Veselov2}, we call a \textit{Lax matrix for a parametric YB map} a matrix $L=L(x;c,\lambda)$ depending on a variable $x$, a parameter $c$ and a \textit{spectral parameter} $\lambda$, such that the following matrix refactorisation problem
\begin{equation} \label{eqLax}
L(u;a,\lambda)L(v;b,\lambda)=L(y;b,\lambda)L(x;a,\lambda), \quad \text{for any $\lambda \in \field{C}$,}
\end{equation}
is satisfied due to the YB map. The above is also called the \textit{Lax-equation}.

Since the Lax-equation (\ref{eqLax}) does not always have a unique solution, Kouloukas and Papageorgiou in \cite{Kouloukas2} proposed the term \textit{strong Lax matrix} for a YB map. This is when the Lax-equation is equivalent to a map
\begin{equation}\label{unique-sol}
  (u,v)=Y_{a,b}(x,y).
\end{equation}
Actually, the uniqueness of refactorisation (\ref{eqLax}) is a sufficient condition for the solutions of the Lax-equation to define a reversible YB map \cite{Veselov3} of the form (\ref{unique-sol}). In the opposite case, one may need to check if the obtained map satisfies the YB equation.

One of the most well known examples of parametric YB maps is Adler's map \cite{Adler}, which is given by
\begin{equation}\label{Adler_map}
(x,y)\longrightarrow (u,v)=\left(y+\frac{a-b}{x+y},x-\frac{a-b}{x+y}\right),
\end{equation}
and it is related to the 3-D consistent discrete potential KDV equation \cite{Frank, PNC}. This map has the following strong Lax representation \cite{Veselov2, Veselov3}
\begin{equation}
L(u;a,\lambda)L(v;b,\lambda)=L(y;b,\lambda)L(x;a,\lambda), \quad \text{for any $\lambda \in K$,}
\end{equation}
where
\begin{equation}
L(x;a,\lambda)=
\left(\begin{matrix}
x & 1 \\
x^2-a-\lambda & x
\end{matrix}\right).
\end{equation}

Now, following \cite{Fordy, Veselov4} we define integrability for YB maps.

\newtheorem{CompleteIntegrability}{Definition}[subsection]
\begin{CompleteIntegrability}
A $2N$-dimensional Yang-Baxter map,  
\begin{equation}
Y:(x_1,...,x_{2N})\mapsto (u_1,...,u_{2N}), \quad u_i=u_i(x_1,...,x_{2N}), \quad i=1,...,2N, \nonumber
\end{equation}
is said to be completely integrable or Liouville integrable if
\begin{enumerate}
%\begin{itemize}
	\item there is a Poisson matrix $J_{ij}=\left\{x_i,x_j\right\}$, of rank $2r$, which is invariant under $Y$, namely $J\circ Y=\tilde{J}$, where $\tilde{J_{ij}}=\left\{u_i,u_j\right\}$,
	\item map $Y$ has $r$ functionally independent invariants, $I_i$, namely $I_i\circ Y=I_i$, which are in involution with respect to the corresponding Poisson bracket, i.e. $\left\{I_i,I_j\right\}=0$, $i,j=1,\ldots,r$, $i\neq j$,
	\item there are $k=2N-2r$ Casimir functions, namely functions $C_i$, $i=1,\ldots,k$, such that $\left\{C_i,f\right\}=0$, for any arbitrary function $f=f(x_1,...,x_{2N})$. These are invariant under $Y$, namely $C_i\circ Y=C_i$.
%\end{itemize}
\end{enumerate}
\end{CompleteIntegrability} 

In this paper we are interested in the Liouville integrability of the transfer dynamics of non-involutive Yang-Baxter maps which admit Lax representation.

\section{Liouville integrability of transfer maps}
Veselov in \cite{Veselov,Veselov3} showed that for those Yang-Baxter maps which admit Lax representation, there are corresponding hierarchies of commuting \textit{transfer} maps which preserve the spectrum of their monodromy matrix.

In this paper, we consider the case of the transfer maps which arise out of the consideration of the initial value problem on the two-dimensional lattice, with periodic boundary conditions. Specifically, given a non-involutive parametric Yang-Baxter map, $Y_{a,b}$, with Lax matrix $L=L(x;a,b)$, it is convenient to consider an initial value problem on the two-dimensional lattice as in Fig. \ref{fig-ivp}. Particularly, we consider the initial values $x_i$ and $y_i$, $i=1,\ldots,n$, placed on the vertices of a two-dimensional lattice, with periodic boundary conditions $x_{n+1}=x_1$ and $y_{n+1}=y_1$. The vertices with values $x_i$ and $y_i$, carry the parameters $a_i$ and $b_i$, respectively. 

For the $n$-periodic initial value problem on the two-dimensional lattice, we define the \textit{transfer map}
\begin{equation}\label{transfermap}
T_n:(x_1,\ldots,x_n,y_1,\ldots,y_n)\mapsto (x_1^{(1)},\ldots,x_n^{(1)},y_1^{(1)},\ldots,y_n^{(1)}),
\end{equation}
which maps the values $x_i$ and $y_i$, $i=1,\ldots,n$, to the next level of the lattice according to following relation
\begin{equation}\label{Smap}
(x_1^{(k-1)},\ldots,x_n^{(k-1)},y_1^{(k-1)},\ldots,y_n^{(k-1)})=T_n(x_1^{(k)},\ldots,x_n^{(k)},y_1^{(k)},\ldots,y_n^{(k)}),
\end{equation}
where $x_i^{(1)}=x_i$ and $y_i^{(1)}=y_i$. Note that, the map $T_1\equiv Y_{a,b}$.

\begin{figure}[ht]
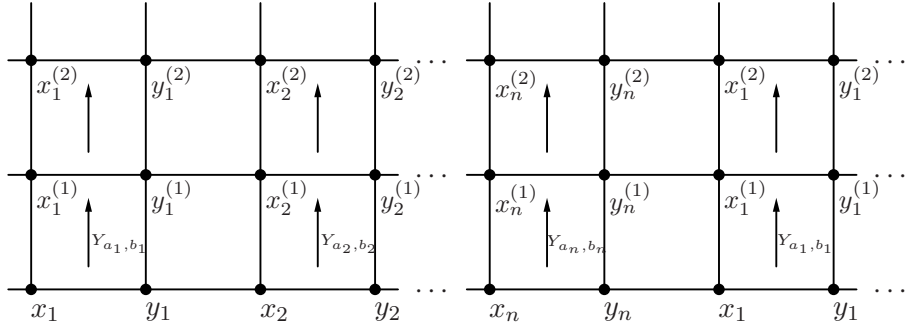

\centertexdraw{
\setunitscale 0.6
\move(-3.7 0) \lvec(-.3 0) \move(.3 0) \lvec(3.7 0)
\move(-3.7 1) \lvec(-.3 1) \move(.3 1) \lvec(3.7 1)
\move(-3.7 2) \lvec(-.3 2) \move(.3 2) \lvec(3.7 2)

\move(-3.5 0) \lvec(-3.5 2.5) \move(3.5 0) \lvec(3.5 2.5)
\move(-2.5 0) \lvec(-2.5 2.5) \move(2.5 0) \lvec(2.5 2.5)
\move(-1.5 0) \lvec(-1.5 2.5) \move(1.5 0) \lvec(1.5 2.5)
\move(-0.5 0) \lvec(-0.5 2.5) \move(0.5 0) \lvec(0.5 2.5)

\htext (-.15 -.07) {$\cdots$}
\htext (-.15 .93) {$\cdots$}
\htext (-.15 1.93) {$\cdots$}

\htext (3.85 -.07) {$\cdots$}
\htext (3.85 .93) {$\cdots$}
\htext (3.85 1.93) {$\cdots$}

\move (-3.5 0) \fcir f:0.0 r:0.05 \move (-2.5 0) \fcir f:0.0 r:0.05 \move (-1.5 0) \fcir f:0.0 r:0.05 \move (-.5 0) \fcir f:0.0 r:0.05 \move (0.5 0) \fcir f:0.0 r:0.05 \move (1.5 0) \fcir f:0.0 r:0.05 \move (2.5 0) \fcir f:0.0 r:0.05  \move (3.5 0) \fcir f:0.0 r:0.05

\move (-3.5 1) \fcir f:0.0 r:0.05 \move (-2.5 1) \fcir f:0.0 r:0.05 \move (-1.5 1) \fcir f:0.0 r:0.05 \move (-.5 1) \fcir f:0.0 r:0.05 \move (0.5 1) \fcir f:0.0 r:0.05 \move (1.5 1) \fcir f:0.0 r:0.05 \move (2.5 1) \fcir f:0.0 r:0.05  \move (3.5 1) \fcir f:0.0 r:0.05

\move (-3.5 2) \fcir f:0.0 r:0.05 \move (-2.5 2) \fcir f:0.0 r:0.05 \move (-1.5 2) \fcir f:0.0 r:0.05 \move (-.5 2) \fcir f:0.0 r:0.05 \move (0.5 2) \fcir f:0.0 r:0.05 \move (1.5 2) \fcir f:0.0 r:0.05 \move (2.5 2) \fcir f:0.0 r:0.05  \move (3.5 2) \fcir f:0.0 r:0.05

\htext (-3.5 -.28) {$x_1$} \htext (-2.5 -.28) {$y_1$} \htext (-1.5 -.28) {$x_2$} \htext (-.5 -.28) {$y_2$} \htext (.5 -.28) {$x_n$} \htext (1.5 -.28) {$y_n$} \htext (2.5 -.28) {$x_1$} \htext (3.5 -.28) {$y_1$}

\htext (-3.45 0.65) {\small$x_1^{(1)}$} \htext (-2.45 0.65) {\small$y_1^{(1)}$} \htext (-1.45 0.65) {\small$x_2^{(1)}$} \htext (-.45 0.65) {\small$y_2^{(1)}$} \htext (.55 0.65) {\small$x_n^{(1)}$} \htext (1.55 0.65) {\small$y_n^{(1)}$} \htext (2.55 0.65) {\small$x_1^{(1)}$} \htext (3.55 0.65) {\small$y_1^{(1)}$}

\htext (-3.45 1.65) {\small$x_1^{(2)}$} \htext (-2.45 1.65) {\small$y_1^{(2)}$} \htext (-1.45 1.65) {\small$x_2^{(2)}$} \htext (-.45 1.65) {\small$y_2^{(2)}$} \htext (.55 1.65) {\small$x_n^{(2)}$} \htext (1.55 1.65) {\small$y_n^{(2)}$} \htext (2.55 1.65) {\small$x_1^{(2)}$} \htext (3.55 1.65) {\small$y_1^{(2)}$}

\move(-3 0.2) \arrowheadtype t:F \arrowheadsize l:.12 w:.06  \avec(-3 0.8)  \move(-3 1.2) \arrowheadtype t:F \arrowheadsize l:.12 w:.06 \avec(-3 1.8)
\move(-1 0.2) \arrowheadtype t:F \arrowheadsize l:.12 w:.06 \avec(-1 0.8)  \move(-1 1.2) \arrowheadtype t:F \arrowheadsize l:.12 w:.06 \avec(-1 1.8)
\move(3 0.2) \arrowheadtype t:F \arrowheadsize l:.12 w:.06 \avec(3 0.8)  \move(3 1.2) \arrowheadtype t:F \arrowheadsize l:.12 w:.06 \avec(3 1.8)
\move(1 0.2) \arrowheadtype t:F \arrowheadsize l:.12 w:.06 \avec(1 0.8)  \move(1 1.2) \arrowheadtype t:F \arrowheadsize l:.12 w:.06 \avec(1 1.8)

\htext (-2.97 0.32) {\tiny$Y_{a_1,b_1}$}  
\htext (-0.97 0.32) {\tiny$Y_{a_2,b_2}$} 
\htext (3.01 0.32) {\tiny$Y_{a_1,b_1}$} 
\htext (1 0.32) {\tiny$Y_{a_n,b_n}$} 

}
\caption{{\em{$n$-periodic initial value problem}}} \label{fig-ivp}
\end{figure}

In this paper we are interested in the invariants and the Liouville integrability of the map $T_n$. 

For the Yang-Baxter map $Y_{a,b}$, the quantity $M(x,y;a,b)=L(y;b)L(x;a)$ is called the \textit{monodromy} matrix. Note that, for simplicity of the notation, we usually omit the dependence on the spectral parameter ($L(x;a,\lambda)\equiv L(x;a)$). 

The transfer map, $T_n$, has the following Lax representation 
\begin{equation}\label{LaxTn}
\prod_{i=1}^n L(u_i;a_i)L(v_i;b_i)=\prod_{i=1}^n L(y_i;b_i)L(x_i;a_i),
\end{equation}
where $L=L(x_i;a_i)$ is the Lax matrix for $Y_{a,b}$. Moreover, for $T_n$, we define the monodromy matrix
\begin{equation}\label{MonMatrix}
M_n(x,y;a,b)=\prod_{j=1}^nL(y_j;b_j)L(x_j;a_j), \qquad M_1\equiv M,
\end{equation}
Note that, for $n=1$, the matrix $M_n$ coincides with the monodromy matrix of the Yang-Baxter map.

The fact that the monodromy matrix is a generating function of first integrals is well known from the eighties; for example, see \cite{FadTah} (and the references therein). Particularly, for the invariants of the map $T_n$ as defined in (\ref{transfermap}), with Lax representation (\ref{LaxTn}), we have the following.

\begin{proposition}
The trace of the monodromy matrix, $M_n$, is a generating function of invariants for the transfer map $T_n$.
\end{proposition}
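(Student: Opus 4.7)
The plan is to prove the proposition by showing that $\tr M_n$ is invariant under the action of the transfer map $T_n$, and then extracting a family of invariants as the coefficients of the polynomial expansion of $\tr M_n$ in the spectral parameter $\lambda$.

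First I would evaluate the monodromy matrix at the image of $T_n$, namely $M_n^{\text{new}}=\prod_{j=1}^n L(y_j^{(1)};b_j)L(x_j^{(1)};a_j)$, and compare it with $M_n=\prod_{j=1}^n L(y_j;b_j)L(x_j;a_j)$. The Lax representation (\ref{LaxTn}) of $T_n$, which itself follows from iterating the YB Lax equation (\ref{eqLax}) at each elementary cell of the periodic lattice in Fig.~\ref{fig-ivp}, reads
\begin{equation*}
\prod_{i=1}^n L(u_i;a_i)L(v_i;b_i)=\prod_{i=1}^n L(y_i;b_i)L(x_i;a_i).
\end{equation*}
After the cyclic rearrangement of factors on the left-hand side dictated by the staircase structure of the $n$-periodic initial value problem (the periodicity $x_{n+1}=x_1$, $y_{n+1}=y_1$ is what makes this shift legitimate), one recognises precisely $M_n^{\text{new}}$. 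Since the trace is invariant under cyclic permutations of its argument, we obtain
\begin{equation*}
\tr M_n\circ T_n=\tr M_n,
\end{equation*}
so $\tr M_n$ is a first integral of $T_n$.

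Next, because $L(x;a,\lambda)$ depends polynomially on the spectral parameter $\lambda$, the finite product $M_n$ is polynomial in $\lambda$, and so is $\tr M_n$. Writing
\begin{equation*}
\tr M_n(x,y;a,b;\lambda)=\sum_{k} I_k(x,y;a,b)\,\lambda^k,
\end{equation*}
the invariance of the left-hand side under $T_n$ forces $I_k\circ T_n=I_k$ for every $k$; the coefficients $I_k$ thus form a family of invariants of $T_n$ produced by $\tr M_n$. This is exactly the sense in which $\tr M_n$ is a generating function of invariants, in direct analogy with the case $n=1$, where the same construction applied to $M=M_1$ produces the invariants of the Yang-Baxter map itself.

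The main (essentially combinatorial) obstacle is making the cyclic rearrangement between the left-hand side of (\ref{LaxTn}) and $M_n^{\text{new}}$ explicit: the updated values $(x_i^{(1)},y_i^{(1)})$ are not simply the outputs $(u_i,v_i)$ of a single YB interaction but are distributed along the staircase, so one has to track carefully how the $L$-factors are reshuffled. Once the shift is identified and the periodic boundary conditions close up the product, cyclicity of the trace absorbs the remaining labelling ambiguity and the conclusion follows at once from the polynomial dependence on $\lambda$.
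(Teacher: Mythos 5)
Your argument is correct and takes essentially the same route as the paper: take the trace of the Lax representation (\ref{LaxTn}) of $T_n$, use cyclic invariance of the trace to identify it with $\tr M_n$ evaluated at the updated variables, and expand in powers of $\lambda$ so that the coefficients $I_k$ are invariants. The cyclic reshuffling of the $L$-factors that you flag as the main obstacle is exactly the step the paper handles (equally tersely) via the middle equality in (\ref{trL}), so there is no substantive difference in method.
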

\begin{proof}
Since,
\begin{equation}\label{trL}
\tr(\prod_{i=1}^n L(u_i;a_i)L(v_i;b_i))\overset{(\ref{LaxTn})}{=}\tr(\prod_{i=1}^n L(y_i;b_i)L(x_i;a_i))=\tr(\prod_{i=1}^n L(x_i;a_i)L(y_i;b_i)),
\end{equation}
and the function $\tr(\prod_{i=1}^n L(x_i;a_i)L(y_i;b_i))$ can be written as $\sum_k \lambda^k I_k(\textbf{x},\textbf{y};\textbf{a},\textbf{b})$, where $x:=(x_1,\ldots,x_n)$, $y:=(y_1,\ldots,y_n)$, from (\ref{trL}) follows that
\begin{equation}
I_k(\textbf{u},\textbf{v};\textbf{a},\textbf{b})=I_k(\textbf{x},\textbf{y};\textbf{a},\textbf{b}),
\end{equation}
which are invariants of the map $T_n$.
\end{proof}

The invariants constitute an important tool towards the integrability of the transfer maps in the Liouville sense. However, if one attempts to extract invariants of the transfer map $T_n$ from the trace of the mondromy matrix $M_n$ with straight-forward calculation, they may not be provided with enough invariants to claim complete integrability. Moreover, the expressions we recieve from the trace of $M_n$ are usually of big lenght, when $n\geq 3$. 

In what follows, we give a proposition for deriving the invariants of the transfer map $T_n$, for arbitrary $n\in\field{N}$, by using the monodromy matrix of the Yang-Baxter map. We use it to make some conclusions regarding the Liouville integrability of map $T_n$.

\begin{proposition}\label{MonodInv}
Let $Y_{a,b}:(x,y;a,b)\mapsto (u,v;a,b)$ be a parametric YB map with Lax matrix $L=L(x;a,\lambda)$, and the trace of its corresponding monodromy matrix is given by
\begin{equation}\label{traceM1}
\tr M_1(x,y;a,b)=\lambda^m I_{m}+\ldots+\lambda I_1+I_0, \quad I_i=I_i(x,y;a,b), \quad i=0,\ldots,m.
\end{equation}
Then, the transfer map $T_n$ admits the following invariants
\begin{equation}\label{InvTnn}
I_{i,j}=I_{i}(x_j,y_j;a_j,b_j), \quad i=0,\ldots,m, \quad j=1,\ldots,n,
\end{equation}
where $x_1=x$, $y_1=y$, $a_1=a$, $b_1=b$. Moreover, the invariants $I_{i,j}$, $j=1,\ldots,n$, are functionally independent.
\end{proposition}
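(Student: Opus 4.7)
The plan is to observe that, according to Figure \ref{fig-ivp}, the transfer map $T_n$ acts componentwise on the $n$ pairs $(x_j,y_j)$, namely as the Cartesian product $Y_{a_1,b_1}\times\cdots\times Y_{a_n,b_n}$. Compatibility with the Lax identity (\ref{LaxTn}) is then just the ordered multiplication of the $n$ single-site Lax equations $L(u_j;a_j)L(v_j;b_j)=L(y_j;b_j)L(x_j;a_j)$. Once this structural point is in place, both the invariance of the $I_{i,j}$ and their functional independence fall out of the preceding proposition.

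First, I would apply that preceding proposition separately at each site $j$. Its trace identity, specialised to the variables and parameters at site $j$, reads $\tr M_1(x_j,y_j;a_j,b_j)=\sum_{i}\lambda^i I_i(x_j,y_j;a_j,b_j)$ and is preserved by the single YB map $Y_{a_j,b_j}$. Because $T_n$ acts on $(x_j,y_j)$ precisely by $Y_{a_j,b_j}$, this yields
\begin{equation*}
I_{i,j}\circ T_n \;=\; I_i(u_j,v_j;a_j,b_j)\;=\;I_i(x_j,y_j;a_j,b_j)\;=\;I_{i,j},
\end{equation*}
for all $i=0,\ldots,m$ and $j=1,\ldots,n$.

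For functional independence of $I_{i,1},\ldots,I_{i,n}$ at fixed $i$, the key observation is that $I_{i,j}$ depends only on the pair $(x_j,y_j)$. Consequently the Jacobian of $(I_{i,1},\ldots,I_{i,n})$ with respect to $(x_1,y_1,\ldots,x_n,y_n)$ is block diagonal; each $2\times 2$ diagonal block is the differential of $I_i(\,\cdot\,,\cdot\,;a_j,b_j)$ at $(x_j,y_j)$, which is generically non-zero since $I_i$ is already a non-trivial invariant of $Y_{a_j,b_j}$. Therefore the $n$ functions are functionally independent on an open dense subset of the phase space.

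The only genuinely delicate step is the first one: recognising, from the lattice picture in Figure \ref{fig-ivp}, that $T_n$ really does decouple into $n$ independent copies of $Y_{a_j,b_j}$, so that the single-site invariants can be transported site by site. Once that is acknowledged, the transport of invariants and the block-diagonal independence argument are essentially mechanical. The pay-off, as foreshadowed in the paragraph just before the statement, is that instead of reading off only the few coefficients of $\tr M_n$ one harvests the full pool of $m\cdot n$ invariants $I_{i,j}$, which is what will be needed for the Liouville-integrability discussion that follows.
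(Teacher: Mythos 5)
Your argument is correct and takes essentially the same route as the paper's: both rest on the observation that $T_n$ acts on each pair $(x_j,y_j)$ by the single map $Y_{a_j,b_j}$ (the paper phrases this via the factorisation $M_n=\prod_{j=1}^n M_1(x_j,y_j;a_j,b_j)$ of the monodromy matrix, but then uses exactly the site-by-site transport of the $I_i$ that you describe), and both obtain functional independence from the fact that the gradients $\nabla I_{i,j}$ for distinct $j$ have disjoint supports, i.e.\ the Jacobian is block diagonal after reordering the variables. No substantive difference.
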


\begin{proof}
From (\ref{traceM1}) we have that $I_i$, $i=0,\ldots,m$, are invariants of $Y$.

The fact that the monodromy matrix, $M_n$, correspoding to the map $T_n$, can be written as
\begin{equation}
M_n(x,y)=\prod_{j=1}^nL(y_j;b_j)L(x_j;a_j)=\prod_{j=1}^n M_1(x_j,y_j;a_j,b_j),
\end{equation}
suggests that the trace of $M_1(x_j,y_j)$, $j=1,\ldots,n$, is a generating function of invariants of the map $T_n$. Indeed,
\begin{equation}
\tr M_1(x_j,y_j;a_j,b_j)=\lambda^m I_{m,j}+\ldots+\lambda I_{1,j}+I_{0,j}, \quad j=1,\ldots,n,
\end{equation}
where $I_{i,j}$, $i=0,\ldots,m$, are given by (\ref{InvTnn}). The quantities $I_{i,j}$ are invariants of the map
\begin{equation}
(x_j,y_j)\mapsto (u_j,v_j), \quad j=1,\ldots,n,
\end{equation}
and, therefore, invariants of the map
\begin{equation}
T_n:(x_1,\ldots,x_j,\ldots,x_n,y_1,\ldots,y_j,\ldots,y_n)\mapsto (u_1,\ldots,u_j,\ldots,u_n,v_1,\ldots,v_j,\ldots,v_n).
\end{equation}
The invariants $I_{i,j}$ are functionally independent, since the gradients
\begin{equation}
\nabla I_{i,j}=\frac{\partial I_i}{\partial x_i}\textbf{e}_j+\frac{\partial I_i}{\partial y_j}\textbf{e}_{j+n}, \qquad j=1,\ldots,n, 
\end{equation}
where $\textbf{e}_j$, $j=1,\ldots,n$, are orthogonal unit vectors, are linearly independent.
\end{proof}

\begin{remark} \normalfont The trace of the monodromy matrix does not guarantee that the deduced invariants are functionally independent. Therefore, for fixed $j$, the invariants $I_{i,j}$, $i=0,\ldots,m$, may not be functionally independent. However, for fixed $i$, the invariants $I_{i,j}$ are functionally independent.
\end{remark}

\begin{remark} \normalfont If the number of invariants we exctract from the trace of the monodromy matrix $M_1$ is insufficient for integrability claims, we use the monodromy matrix $M_2$, of the two-periodic problem, as generating function of invariants. Then, we construct the invariants of the $n$-periodic problem, as in Proposition \ref{MonodInv}.
\end{remark}

In what follows, we show that the Liouville integrability of the transfer map $T_2$, corresponding to a YB map of the form (\ref{unique-sol}), implies Liouville integrability of the transfer map $T_{2n}$.

\begin{lemma}\label{MatrixFormJ2}
Let $J_2$ be a $4\times 4$ Poisson matrix of rank $2$, and $C_1=x_1+y_1$, $C_2=x_2+y_2$ the corresponding Casimir functions. Then, $J_{2}$ must be of the following form
\begin{equation}\label{J2Lemma}
J_2=
\left(\begin{array}{r|c}
A & -A \\
\hline
-A & A\end{array}\right), \qquad A=\mathfrak{i}\, \sigma_2,
\end{equation}
where $\mathfrak{i}=\sqrt{-1}$ and $\sigma_2$ is the standard Pauli matrix.
\end{lemma}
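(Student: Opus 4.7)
The plan is to convert the two hypotheses --- skew-symmetry of $J_2$ and the fact that $C_1=x_1+y_1$, $C_2=x_2+y_2$ are Casimirs --- into a linear system for the entries of $J_2$, solve it, and use the rank condition to normalise the final scalar.

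First I would parametrise the most general $4\times 4$ skew-symmetric matrix, in the coordinate ordering $(x_1,x_2,y_1,y_2)$ suggested by the block form of (\ref{J2Lemma}), by six unknowns:
\[
J_2 = \begin{pmatrix} 0 & \alpha & \beta & \gamma \\ -\alpha & 0 & \delta & \epsilon \\ -\beta & -\delta & 0 & \zeta \\ -\gamma & -\epsilon & -\zeta & 0 \end{pmatrix}.
\]
The Casimir property $\{C_k,f\}\equiv 0$ for all $f$ is equivalent to $J_2\,\nabla C_k=0$. With $\nabla C_1=(1,0,1,0)^T$ and $\nabla C_2=(0,1,0,1)^T$ this amounts to requiring that the first and third columns of $J_2$ sum to zero, and similarly for the second and fourth. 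Reading these off row by row and exploiting skew-symmetry to eliminate the redundant equations leaves the system underdetermined by a single scalar, and the surviving entries form precisely the checkerboard pattern of (\ref{J2Lemma}).

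Finally I would invoke the rank hypothesis. The one-parameter family produced above has rank $0$ when the scalar vanishes and rank $2$ otherwise; hence $\rank J_2=2$ rules out the trivial case and pins $J_2$ down up to the scaling that fixes $A=\mathfrak{i}\sigma_2$. I expect the only real obstacle to be a linear-algebraic bookkeeping check: one must confirm that no hidden free parameter slips through the conditions above. This is guaranteed \emph{a priori} by the rank count $\dim\ker J_2=4-\rank J_2=2$, which forces $\nabla C_1$ and $\nabla C_2$ to exhaust $\ker J_2$; any extra free parameter in $J_2$ would provide a third independent Casimir, contradicting $\rank J_2=2$.
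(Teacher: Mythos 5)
Your proposal is correct and follows essentially the same route as the paper: both impose skew-symmetry together with the linear conditions $\nabla C_k\cdot J_2=0$ (rows/columns $1,3$ and $2,4$ summing to zero), solve the resulting system to get a one-parameter checkerboard family, and use the rank-$2$ hypothesis plus a scaling normalisation to fix $A=\mathfrak{i}\,\sigma_2$. The only caveat is that your closing ``a priori'' remark (an extra free parameter yielding a third Casimir) is not quite right as stated, but it is not needed since the direct solution of the linear system already settles the count.
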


\begin{proof}
If $C_1=x_1+y_1$ and $C_2=x_2+y_2$ are the Casimir functions, then the Poisson matrix, $J_2$, will satisfy the following system of equations
\begin{equation}\label{J2Cas}
\nabla C_i\cdot J_2=0, \qquad i=1,2,
\end{equation}
where the gradients of the Casimir functions are given by
\begin{equation}
\nabla C_i=\textbf{e}_i+\textbf{e}_{i+2}, \qquad i=1,2, 
\end{equation}
where $\textbf{e}_i$ are orthogonal unit vectors.

It can be readily verified that the solution of the system (\ref{J2Cas}) is given by
\begin{equation}
\left[J_2\right]_{13}=\left[J_2\right]_{24}=0 \qquad \text{and} \qquad \left[J_2\right]_{12}=\left[J_2\right]_{23}=\left[J_2\right]_{34}=-\left[J_2\right]_{14}=c,
\end{equation}
where $c$ is an arbitrary constant, which can be set equal to $-1$ without any loss of generality. Therefore, $J_2$ is given by (\ref{J2Lemma}).
\end{proof}

\begin{lemma}\label{MatrixForm}
Let $J_{2n}$ be a $4n\times 4n$ generalisation of Poisson matrix (\ref{J2Lemma}), namely 
\begin{equation}\label{J2nGeneral}
\left[J_{2n}\right]_{ij}=
\left(\begin{array}{r|c}
A_n & -A_n \\
\hline
-A_n & A_n\end{array}\right), \qquad A_n=\diag(\underbrace{\mathfrak{i}\,\sigma_2,\ldots,\mathfrak{i}\,\sigma_2}_{n}),
\end{equation}
with $\rank(J_{2n})=2n$, and $C_i=x_i+y_i$, $i=1,\ldots,2n$ are the corresponding Casimir functions. Then, the functions $f_i=f_i(x_{2i-1},x_{2i},y_{2i-1},y_{2i})$, $i=1,\ldots,n$, are in involution with respect to the Poisson matrix (\ref{J2nGeneral}).
\end{lemma}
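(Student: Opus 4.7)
The plan is to exploit the block-diagonal structure of $A_n=\diag(\mathfrak{i}\sigma_2,\ldots,\mathfrak{i}\sigma_2)$, whose non-zero entries sit only in the $2\times 2$ blocks indexed by the consecutive pairs $(2k-1,2k)$. Under this, the Poisson bracket encoded by $J_{2n}$ splits into $n$ independent brackets, one on each group of four coordinates
$$G_k:=\{x_{2k-1},x_{2k},y_{2k-1},y_{2k}\},\qquad k=1,\ldots,n,$$
and the $f_i$ of the statement depend only on variables in $G_i$.

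Ordering the coordinates as $(z_1,\ldots,z_{4n})=(x_1,\ldots,x_{2n},y_1,\ldots,y_{2n})$, I would first read off from the block form of $J_{2n}$ in (\ref{J2nGeneral}) that $[J_{2n}]_{\alpha\beta}\neq 0$ forces both $z_\alpha$ and $z_\beta$ to lie in a common $G_k$. A direct inspection shows that within $G_k$ the only non-trivial brackets are
$$\{x_{2k-1},x_{2k}\}=\{y_{2k-1},y_{2k}\}=1,\qquad \{x_{2k-1},y_{2k}\}=-1,\qquad \{x_{2k},y_{2k-1}\}=1,$$
the first pair coming from the diagonal $A_n$ blocks of $J_{2n}$ and the second pair from the off-diagonal $-A_n$ blocks; every bracket between variables belonging to two distinct groups vanishes.

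Next I would expand the Poisson bracket via the chain rule: for $i\neq j$,
$$\{f_i,f_j\}=\sum_{\alpha,\beta=1}^{4n}\frac{\partial f_i}{\partial z_\alpha}\,[J_{2n}]_{\alpha\beta}\,\frac{\partial f_j}{\partial z_\beta}.$$
A term contributes only if simultaneously $z_\alpha\in G_i$ (so that $\partial f_i/\partial z_\alpha$ can be non-zero), $z_\beta\in G_j$ (so that $\partial f_j/\partial z_\beta$ can be non-zero), and $z_\alpha,z_\beta$ share a common group $G_k$ (so that $[J_{2n}]_{\alpha\beta}\neq 0$). Since $G_i\cap G_j=\emptyset$ whenever $i\neq j$, these three requirements are incompatible, so every term in the sum vanishes and $\{f_i,f_j\}=0$. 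The case $i=j$ is automatic by antisymmetry of the bracket.

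There is no real obstacle; the argument is essentially bookkeeping. The only care needed is in lining up the $\mathfrak{i}\sigma_2$ blocks of $A_n$ with the pairing $(x_{2k-1},x_{2k})$ of the $x$-coordinates and $(y_{2k-1},y_{2k})$ of the $y$-coordinates, so that the disjointness of the groups $G_k$ is what actually drives the vanishing.
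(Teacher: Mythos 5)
Your proof is correct and takes essentially the same route as the paper: both arguments rest on the observation that the nonzero entries of $J_{2n}$ only couple coordinates within a single group $G_k=\{x_{2k-1},x_{2k},y_{2k-1},y_{2k}\}$, so that $\nabla f_i\cdot J_{2n}$ remains supported on the indices of $G_i$ and pairs trivially with $\nabla f_j$ for $j\neq i$. The paper writes this out by explicitly computing $\nabla f_i\cdot J_{2n}$ as a combination of the unit vectors indexed by $G_i$, while you argue term by term in the double sum, but the mechanism is identical.
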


\begin{proof}
The gradients of the functions $f_i=f_i(x_{2i-1},x_{2i},y_{2i-1},y_{2i})$ are given by
\begin{equation}
\nabla f_i=\frac{\partial f_i}{\partial x_{2i-1}}\textbf{e}_{2i-1}+\frac{\partial f_i}{\partial x_{2i}}\textbf{e}_{2i}+\frac{\partial f_i}{\partial y_{2i-1}}\textbf{e}_{2(i+n)-1}+\frac{\partial f_i}{\partial y_{2i}}\textbf{e}_{2(i+n)}.
\end{equation}
Now, multiplying $f_i$ with the Poisson matrix $J_{2n}$ we obtain
\begin{equation}
\nabla f_i\cdot J_{2n}=(-1)^i\left(\frac{\partial f_i}{\partial x_{2i}}-\frac{\partial f_i}{\partial y_{2i}}\right)(\textbf{e}_{2i-1}-\textbf{e}_{2(i+n)-1})
-(-1)^i\left(\frac{\partial f_i}{\partial x_{2i-1}}-\frac{\partial f_i}{\partial y_{2i-1}}\right)(\textbf{e}_{2i}-\textbf{e}_{2(i+n)}).
\end{equation}
Therefore, the functions $f_i$ and $f_j$, $i,j=1,\ldots,n$, $i\neq j$, are in involution with respect to $J_{2n}$, namely,
\begin{equation}
\nabla f_i\cdot J_{2n}\cdot \left(\nabla f_j\right)^T=0,
\end{equation}
since $\textbf{e}_i\cdot\textbf{e}_i=\delta_{ij}$, where $\delta_{ij}$ is the Kronecker operator.
\end{proof}

We now use the above Lemmas to discuss the Liouville integrability of the transfer maps corresponding to the $n$-periodic initial value problem. In particular, we have the following.

\begin{proposition}\label{LiouvilleIntT22}
Let $Y_{a,b}(x,y)=(u,v)$ be a parametric YB map and the corresponding transfer map $T_2$ is completely integrable map. Then, the following statements hold:
\begin{enumerate}
	\item If the corresponding Poisson matrix is of full rank, then the map $T_{2n}$, $n\in\field{N}$, is completely integrable.
  \item If the corresponding Poisson matrix has rank 2, and the Casimir functions are of the form $C_i=x_i+y_i$, $i=1,2$, then the map $T_{2n}$, $n\in\field{N}$, is completely integrable.
\end{enumerate}
\end{proposition}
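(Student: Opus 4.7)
The plan is to combine three ingredients: (i) a lifting procedure for invariants of $T_{2n}$ from invariants of $T_2$, in the spirit of Proposition \ref{MonodInv} but using the monodromy matrix $M_2$ as generating function (as indicated in the remark preceding the statement); (ii) a block-structured extension of the $4\times 4$ Poisson matrix $J_2$ to a $4n\times 4n$ Poisson matrix $J_{2n}$; and (iii) a dimension count matching the definition of Liouville integrability. For (i), any invariant $f(x_1,x_2,y_1,y_2)$ of $T_2$ yields $n$ functionally independent invariants $F^{(j)}:=f(x_{2j-1},x_{2j},y_{2j-1},y_{2j})$, $j=1,\ldots,n$, of $T_{2n}$, by essentially the same argument as in the proof of Proposition \ref{MonodInv} applied with $\tr M_2$ in place of $\tr M_1$.

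For Part 1, the full-rank assumption supplies two invariants $f_1,f_2$ of $T_2$ in involution, and no Casimirs are needed. I would construct $J_{2n}$ as the $4n\times 4n$ matrix whose $4\times 4$ diagonal blocks, after reordering the coordinates so that the quadruples $(x_{2j-1},x_{2j},y_{2j-1},y_{2j})$ appear consecutively, are copies of $J_2$, and whose off-diagonal blocks vanish. This $J_{2n}$ has full rank $4n$, so $k=0$. Lifting $f_1,f_2$ produces $2n$ functionally independent invariants $F^{(j)}_k$, and because inter-block entries of $J_{2n}$ vanish, $\{F^{(i)}_k,F^{(j)}_l\}=\delta_{ij}\{f_k,f_l\}_{J_2}=0$, so involution is immediate.

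For Part 2, Lemma \ref{MatrixFormJ2} pins down $J_2$ completely, and the integrability of $T_2$ produces a single non-Casimir invariant $f$. I take $J_{2n}$ to be precisely the matrix of Lemma \ref{MatrixForm}, whose rank is $2n$ and whose Casimir functions are the $C_i=x_i+y_i$ for $i=1,\ldots,2n$. Lifting $f$ yields $n$ functionally independent invariants $F^{(j)}$ supported on the disjoint blocks $(x_{2j-1},x_{2j},y_{2j-1},y_{2j})$, which are therefore pairwise in involution directly by Lemma \ref{MatrixForm}. The dimension count $r=n$, $2r=\rank(J_{2n})=2n$, $k=4n-2n=2n$ then matches the Liouville integrability definition exactly.

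The principal obstacle, common to both parts, is establishing that the constructed $J_{2n}$ is genuinely invariant under $T_{2n}$, namely $J_{2n}\circ T_{2n}=\widetilde{J}_{2n}$. In Part 2 this reduces to showing that each $C_i=x_i+y_i$ remains a Casimir of $T_{2n}$, a property that should propagate from the Casimir condition for $T_2$ (equivalently, from the fact that the underlying Yang-Baxter map preserves $x+y$) through the rung-wise applications of $Y_{a_i,b_i}$ that compose one step of $T_{2n}$; once this is in place, the remaining entries of $J_{2n}$ are forced by the form of Lemma \ref{MatrixForm}. In Part 1 the analogous check is more delicate since $J_2$ is not uniquely determined, and one must argue directly that the chosen block decomposition is compatible with the composition structure of $T_{2n}$ — this is the step I expect to require the most care.
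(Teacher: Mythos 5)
Your proposal follows essentially the same route as the paper's own proof: lift the invariants of $T_2$ blockwise to $T_{2n}$ in the manner of Proposition \ref{MonodInv} (using $\tr M_2$ as the generating function), take $J_{2n}$ block-diagonal with $J_2$ blocks in Part 1 and equal to the matrix of Lemma \ref{MatrixForm} in Part 2, and conclude by the rank/invariant/Casimir count. The invariance of $J_{2n}$ under $T_{2n}$, which you rightly flag as the delicate step, is in fact passed over in silence in the paper's proof as well, so your write-up is, if anything, slightly more careful on that point.
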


\begin{proof}
\begin{enumerate}
\item Suppose that the map $T_2$, given by 
\begin{equation}\label{T22map}
T_2: (x_1,x_2,y_1,y_2)\mapsto (u_1,u_2,v_1,v_2),
\end{equation}
is completely integrable. Then suppose that $J_2$ is the Poisson matrix and that the corresponding Poisson bracket is given by
\begin{subequations}\label{poissonT22}
\begin{align}
&\left\{x_1,x_2\right\}=\alpha_{12}, \qquad\left\{y_1,y_2\right\}=\beta_{12},\qquad\left\{x_1,y_1\right\}=\gamma_{11},\\
&\left\{x_1,y_2\right\}=\gamma_{12},\qquad\left\{x_2,y_1\right\}=\gamma_{21},\qquad\left\{x_2,y_2\right\}=\gamma_{22},
\end{align}
\end{subequations}
where $\alpha_{12}=\alpha_{12}(x_1,x_2,y_1,y_2)$, $\beta_{12}=\beta_{12}(x_1,x_2,y_1,y_2)$ and $\gamma_{ij}=\gamma_{ij}(x_1,x_2,y_1,y_2)$, $i,j=1,2$.

Matrix $J_2$ has full rank, namely $\rank J_2=4$. In this case, $T_2$ must have two functionally independent invariants, say $I_i=I_i(x_1,x_2,y_1,y_2)$, $i=1,2$, which are in involution with respect to the Poisson bracket (\ref{poissonT22}), namely $\left\{I_1,I_2\right\}=0$. 

Now, according to Proposition \ref{MonodInv}, the transfer map $T_{2n}$ admits the following invariants
\begin{equation}\label{invT2n}
I_{i,k}=I_{i}(x_{2k-1},x_{2k},y_{2k-1},y_{2k}), \qquad i=1,2, \qquad k=1,\ldots,n.
\end{equation}
We consider the following Poisson bracket
\begin{subequations}\label{poissonT2n2n}
\begin{align}
&\left\{x_{2k-1},x_{2k}\right\}=\alpha_{2k-1,2k}, \qquad\left\{y_{2k-1},y_{2k}\right\}=\beta_{2k-1,2k},\qquad\left\{x_{2k-1},y_{2k-1}\right\}=\gamma_{2k-1,2k-1},\\
&\left\{x_{2k-1},y_{2k}\right\}=\gamma_{2k-1,2k},\qquad\left\{x_{2k},y_{2k-1}\right\}=\gamma_{2k,2k-1},\qquad\left\{x_{2k},y_{2k}\right\}=\gamma_{2k,2k},
\end{align}
\end{subequations}
where $\alpha_{2k-1,2k}=\alpha_{12}(x_{2k-1},x_{2k},y_{2k-1},y_{2k})$, $\beta_{2k-1,2k}=\beta_{12}(x_{2k-1},x_{2k},y_{2k-1},y_{2k-1})$ and $\gamma_{mn}=\gamma_{ij}(x_{2k-1},x_{2k},y_{2k-1},y_{2k-1})$, $i,j=1,2$, $m,n=2k-1,2k$, for $k=1,\ldots,n$. The corresponding Poisson matrix, after a permutation of the rows and columns, can be written in the form of a $4n\times 4n$ matrix which has matrix $J_2$ along its diagonal, namely
\begin{equation}\label{PoissonMatrixT22}
J_{2n}=\diag(\underbrace{J_2,\ldots,J_2}_{n}).
\end{equation}
Therefore, the rank of $J_{2n}$ is $\rank J_{2n}=n\cdot\rank J_2=4n$. Then, by construction of the Poisson matrix, $J_{2n}$, the $2n$ invariants, $I_{i,k}$, given in (\ref{invT2n}), are in involution with respect to the Poisson bracket (\ref{poissonT2n2n}), namely $\left\{I_{i,k},I_{j,k}\right\}=0$, $i,j=1,2$, $k=1,\ldots,n$. Therefore, in this case, the map $T_{2n}$ is completely integrable.

\item The rank of $J_2$ is $\rank J_2=2$. In this case, $T_2$ will have one invariant $I=I(x_1,x_2,y_1,y_2)$ and two Casimir functions $C_i=x_i+y_i$, $i=1,2$. Thus, according to Lemma \ref{MatrixFormJ2}, $J_2$ must be of the form (\ref{J2Lemma}).

Now, according to Proposition \ref{MonodInv}, $I_k=I(x_{2k-1},x_{2k},y_{2k-1},y_{2k})$, $k=1,\ldots,n$, will be invariants for the map $T_{2n}$. Since $C_1$ and $C_2$ are invariants of the map $T_2$, according to proposition \ref{MonodInv}, the quantities $C_i=x_i+y_i$, $i=1,\ldots,2n$, are invariants of the map $T_{2n}$. Moreover they are Casimir functions for the Poisson matrix $J_{2n}$, which must be of the form (\ref{J2nGeneral}). In addition, the invariants $I_k$, $k=1,\ldots,n$, are in involution with respect to $J_{2n}$, due to Lemma \ref{MatrixForm}.

The rank of $J_{2n}$ in this case is $\rank J_{2n}=n\cdot\rank J_2=2n$, $C_{i,k}$ are $2n$ Casimirs and $I_k$ are functionally independent invariants in involution. Therefore, $T_{2n}$ is completely integrable in this case as well.
\end{enumerate}
\end{proof}

In the case of four-dimensional YB maps of the form
\begin{equation}\label{4dYB}
(\textbf{x},\textbf{y})\overset{Y_{a,b}}{\longrightarrow }(\textbf{u},\textbf{v}),
\end{equation}
where $\textbf{x}:=(x_1,x_2)$, $\textbf{y}:=(y_1,y_2)$, $\textbf{u}:=(u_1,u_2)$ and $\textbf{v}:=(v_1,v_2)$, that we deal with in section 4, we have the following.

\begin{corollary}\label{Liouville4Dmaps}
If $Y_{a,b}$ is a four-dimensional parametric YB map of the form (\ref{4dYB}) and it is completely integrable, then:
\begin{enumerate}
	\item If the corresponding Poisson matrix is of full rank, then the map $T_n$, $n\in\field{N}$, is completely integrable.
  \item If the corresponding Poisson matrix has rank 2, and the Casimir functions are of the form $C_i=x_i+y_i$, $i=1,2$, then the map $T_n$, $n\in\field{N}$, is completely integrable.
\end{enumerate}
\end{corollary}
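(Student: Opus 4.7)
The plan is to adapt the argument of Proposition \ref{LiouvilleIntT22} to the present setting, where the $4$-dimensional YB map $Y_{a,b}$ plays the role previously played by $T_2$, and the $4n$-dimensional transfer map $T_n$ takes over the role of $T_{2n}$. Structurally this is natural: each ``site'' of the $n$-periodic lattice now carries two scalar variables rather than one, so an $n$-periodic problem for a $4$-dimensional map has the same total phase-space dimension as a $2n$-periodic problem for a $2$-dimensional map.

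First, I would apply Proposition \ref{MonodInv}: writing the scalar invariants of $Y_{a,b}$ extracted from $\tr M_1$ as $I_i(\mathbf{x},\mathbf{y};a,b)$, the quantities $I_{i,j}:=I_i(\mathbf{x}_j,\mathbf{y}_j;a_j,b_j)$ are invariants of $T_n$ for each site $j=1,\ldots,n$, and the proposition already guarantees functional independence across $j$ since the corresponding gradients have disjoint supports in the $4n$ coordinates.

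For part 1, I would take as Poisson structure on the $4n$-dimensional phase space of $T_n$ the block-diagonal matrix $J_n=\diag(J,\ldots,J)$ with $n$ copies of the full-rank $4\times 4$ Poisson matrix $J$ of $Y_{a,b}$; its rank is $4n$. Since the bracket vanishes between variables at distinct sites, the $2n$ invariants $I_{i,j}$ ($i=1,2$, $j=1,\ldots,n$) are automatically in pairwise involution across different $j$, while within a single site involution is inherited from the complete integrability of $Y_{a,b}$. The dimension count then delivers exactly the $2n$ functionally independent invariants in involution needed for complete integrability of $T_n$.

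For part 2, I would invoke Lemma \ref{MatrixFormJ2} to fix the form of $J$, then assemble the Poisson structure on $T_n$ in the pattern of (\ref{J2nGeneral}) after identifying the four scalar variables at site $j$ of $T_n$ with the consecutive quadruple $(x_{2j-1},x_{2j},y_{2j-1},y_{2j})$ appearing in Lemma \ref{MatrixForm}. Under this identification the $2n$ quantities $C_{i,j}=x_{i,j}+y_{i,j}$ remain Casimirs, Lemma \ref{MatrixForm} delivers the pairwise involution of the single-site invariants $I_j=I(\mathbf{x}_j,\mathbf{y}_j)$, and the dimension count $4n=2n+2n$ (with $\rank J_n=2n$) closes the integrability criterion. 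The main obstacle is the indexing bookkeeping needed to match the coordinates of $T_n$ with the pattern in Lemma \ref{MatrixForm}; conceptually no ingredient beyond Proposition \ref{LiouvilleIntT22} is required, and the invariance of $J_n$ under $T_n$ is inherited from that of $J$ under $Y_{a,b}$ via the site-decoupled structure of the Poisson matrix.
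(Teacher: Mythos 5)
Your proposal is correct and follows essentially the same route as the paper, which simply identifies the four-dimensional YB map $Y_{a,b}$ with the map $T_2$ of Proposition \ref{LiouvilleIntT22} (and hence $T_n$ with $T_{2n}$) and invokes that proposition together with Proposition \ref{MonodInv} and Lemmas \ref{MatrixFormJ2}, \ref{MatrixForm}. Your write-up merely makes the identification and the accompanying bookkeeping explicit.
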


\begin{proof}
The proof is straight-forward, if one identifies the four-dimensional YB map $Y_{a,b}$ with $T_2$ in proposition \ref{LiouvilleIntT22}.
\end{proof}

In the next section we apply the above results to YB maps related to NLS type equations.

\section{YB maps for NLS type equations}
In \cite{Sokor-Sasha} we used Darboux matrices associated to Lax operators of NLS type to construct YB maps. These Lax operators correspond to a recent classification of \textit{automorphic} Lie algebras \cite{BuryPhD,Bury-Sasha,Lombardo,Lombardo-Sanders}. Specifically, we constructed six-dimensional YB maps together with their four-dimensional restrictions on invariant leaves. We also showed that the latter are integrable in the Liouville sense. 

Here, we consider the transfer dynamics of the four-dimensional maps produced in \cite{Sokor-Sasha}. In particular, we will discuss the cases of the YB maps associated with the NLS equation \cite{ZS}
\begin{equation}\label{NLS}
p_t=p_{xx}+4p^2q, \qquad q_t=-q_{xx}-4pq^2,
\end{equation}
and the DNLS equation \cite{Kaup-Newell},
\begin{equation}\label{DNLS}
p_t=p_{xx}+4(p^2q)_x, \qquad q_t=-q_{xx}+4(pq^2)_x.
\end{equation}

\subsection{The Nonlinear Schr\"odinger equation}
In the case of the NLS equation, one can derive the Adler-Yamilov map \cite{Adler-Yamilov, Kouloukas, PT}
\begin{eqnarray} \label{YB_NLS}
(\textbf{x},\textbf{y})\overset{Y_{a,b}}{\longrightarrow }\left(y_1-\frac{a -b}{1+x_1y_2}x_1,y_2,x_1,x_2+\frac{a -b}{1+x_1y_2}y_2\right).
\end{eqnarray}
The Adler-Yamilov map follows from the following strong Lax representation \cite{Sokor-Sasha}
\begin{equation} \label{lax_eq_NLS}
  L(\textbf{u};a,\lambda)L(\textbf{v};b,\lambda)=L(\textbf{y};b,\lambda)L(\textbf{x};a,\lambda),
\end{equation}
where $L$ is given by
\begin{equation} \label{laxNLS}
L(\textbf{x};a,\lambda)=\lambda \left(
\begin{matrix}
 1 & 0\\
 0 & 0
\end{matrix}\right)+\left(
\begin{matrix}
 a+x_1x_2 & x_1\\
 x_2 & 1
\end{matrix}\right), \quad \textbf{x}=(x_1,x_2).
\end{equation}
The matrix (\ref{laxNLS}) is a Darboux matrix related to the NLS equation (\ref{NLS}), where $x_1:=p(x,t)$ and $x_2:=q(x,t)$ (see \cite{Rog-Schief} and references therein).

The map (\ref{YB_NLS}) has the following invariants
\begin{equation}\label{invYBNLS} 
I_1(\textbf{x},\textbf{y})=x_1 x_2+y_1 y_2, \qquad I_2(\textbf{x},\textbf{y})=(a+x_1 x_2)(b+y_1 y_2)+x_1 y_2+x_2 y_1,
\end{equation}
which are derived from the trace of the monodromy matrix $M(\textbf{y};b,\lambda)M(\textbf{x};a,\lambda)$. Moreover, the map (\ref{YB_NLS}) is completely integrable, with the corresponding Poisson matrix given by 
\begin{equation}\label{PoissonMatNLS}
J_2:=\diag(\mathfrak{i}\,\sigma_2,\mathfrak{i}\,\sigma_2), \qquad \text{where $\sigma_2$ is the standard Pauli matrix}. 
\end{equation}

The transfer map corresponding to the $n$-periodic problem is given by
\begin{subequations}\label{MapTnnAY}
\begin{align} 
    &T_n: (\textbf{x}_1, \ldots, \textbf{x}_{n},\textbf{y}_1, \ldots, \textbf{y}_{n}) \mapsto  (\textbf{u}_1, \ldots, \textbf{u}_{n},\textbf{v}_1, \ldots, \textbf{v}_{n}),\label{MapTnnAY-1}\\
\intertext{where $\textbf{x}_i=(x_{1,i},x_{2,i})$, $\textbf{y}_i=(y_{1,i},y_{2,i})$, $\textbf{u}_i=(u_{1,i},u_{2,i})$ and $\textbf{v}_i=(v_{1,i},v_{2,i})$, $i=1,\ldots,n$, and $\textbf{u}_i$, $\textbf{v}_i$ are given by}
&u_{1,i}=y_{1,i}-\frac{a_i-b_i}{x_{1,i}+y_{2,i}}x_{1,i}, \quad u_{2,i}=y_{2,i}, \quad v_{1,i}=x_{1,i}, \quad v_{2,i}=x_{2,i}-\frac{a_i-b_i}{x_{1,i}+y_{2,i}}y_{2,i}, \quad i=1,\ldots,n, \label{MapTnnAY-2}
\end{align}
\end{subequations}
Specifically, we have the following.

\begin{proposition}\label{AYTnnInt}
\begin{enumerate}
	\item The map (\ref{MapTnnAY}) has the following invariants
	\begin{subequations}\label{AYTnninv}
	\begin{align}
&I_{1,i}(\textbf{x},\textbf{y})=x_{1,i} x_{2,i}+y_{1,i} y_{2,i}, \quad i=1,\ldots,n, \label{AYTnninv-2} \\
&I_{2,i}(\textbf{x},\textbf{y})=(a_i+x_{1,i} x_{2,i})(b_i+y_{1,i} y_{2,i})+x_{1,i} y_{2,i}+x_{2,i} y_{1,i}, \quad i=1,\ldots,n.\label{AYTnninv-1} 
  \end{align}
	\end{subequations}
	\item The map (\ref{MapTnnAY}) is completely integrable.
\end{enumerate}
\end{proposition}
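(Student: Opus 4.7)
The plan is to avoid direct computation and instead rely on the two general results already established in the paper: Proposition \ref{MonodInv} (construction of invariants for $T_n$ from invariants of the underlying YB map) and Corollary \ref{Liouville4Dmaps} (Liouville integrability of $T_n$ for four-dimensional YB maps).

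For part 1, I would note that the Adler-Yamilov map (\ref{YB_NLS}) is itself a parametric Yang-Baxter map with Lax matrix (\ref{laxNLS}), and that the trace of its monodromy matrix $M_1(\textbf{x},\textbf{y};a,b) = L(\textbf{y};b,\lambda)L(\textbf{x};a,\lambda)$ has already been computed in (\ref{invYBNLS}), giving the two coefficients $I_1(\textbf{x},\textbf{y})$ and $I_2(\textbf{x},\textbf{y})$. Proposition \ref{MonodInv} then applies verbatim: for each $i=1,\ldots,n$, substituting $(\textbf{x}_i,\textbf{y}_i;a_i,b_i)$ into $I_1,I_2$ produces the quantities $I_{1,i},I_{2,i}$ of (\ref{AYTnninv}), and these are invariants of $T_n$ as defined in (\ref{MapTnnAY}). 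Hence part 1 is an immediate corollary of Proposition \ref{MonodInv}; no further calculation is needed.

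For part 2, the strategy is to invoke Corollary \ref{Liouville4Dmaps}. The hypotheses to verify are: (a) the Adler-Yamilov map is a four-dimensional parametric YB map of the form (\ref{4dYB}), which is immediate from (\ref{YB_NLS}); (b) it is completely integrable; (c) the Poisson matrix is of full rank, so that case 1 of the corollary applies. Items (b) and (c) are provided by the Poisson matrix $J_2 = \diag(\mathfrak{i}\,\sigma_2,\mathfrak{i}\,\sigma_2)$ in (\ref{PoissonMatNLS}); since $\sigma_2$ is a nondegenerate $2\times 2$ matrix, the block-diagonal $J_2$ is nondegenerate of rank $4$, and the involution of $I_1,I_2$ with respect to $J_2$ was already established in \cite{Sokor-Sasha}. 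Applying Corollary \ref{Liouville4Dmaps}, case 1, with the role of the input four-dimensional map played by (\ref{YB_NLS}), yields complete integrability of $T_n$ for every $n\in\field{N}$.

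I do not expect any real obstacle. The only steps that require attention are bookkeeping: one must be careful that the permutation of coordinates used in Proposition \ref{LiouvilleIntT22} to put the Poisson matrix in block-diagonal form $\diag(J_2,\ldots,J_2)$ is compatible with the ordering $(\textbf{x}_1,\ldots,\textbf{x}_n,\textbf{y}_1,\ldots,\textbf{y}_n)$ used in (\ref{MapTnnAY-1}), and that the $2n$ invariants $\{I_{1,i},I_{2,i}\}_{i=1}^n$ are functionally independent. The latter follows because, for fixed $i$, the gradients of $I_{1,i}$ and $I_{2,i}$ are supported only on the coordinates $(\textbf{x}_i,\textbf{y}_i)$ and are linearly independent there (this is the integrability of the original map), while gradients for distinct $i$ have disjoint supports. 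Involution is automatic from the block structure of $J_{2n}$ established in the proof of Proposition \ref{LiouvilleIntT22}. This completes the argument.
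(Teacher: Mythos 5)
Your proof is correct, and part 1 coincides exactly with the paper's own argument (a direct appeal to Proposition \ref{MonodInv} applied to the invariants (\ref{invYBNLS})). For part 2, however, you take a genuinely different route: you invoke Corollary \ref{Liouville4Dmaps} (case 1, full rank), using only the facts that the Adler--Yamilov map is a four-dimensional YB map of the form (\ref{4dYB}), is completely integrable, and has the nondegenerate Poisson matrix (\ref{PoissonMatNLS}). The paper instead proves part 2 by hand: it builds $J_{2n}=\diag(J_2,\ldots,J_2)$, writes out the gradients $\nabla I_{1,i}$ and $\nabla I_{2,i}$ explicitly, and verifies the involution relations $\nabla I_{\alpha,i}\cdot J_{2n}\cdot(\nabla I_{\beta,j})^{t}=0$ directly --- for $i\neq j$ by disjointness of supports, and for $i=j$ by computation. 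Your route is the one the paper itself adopts for the analogous DNLS result (Proposition \ref{DNLST2nInt}), so it is entirely in the spirit of Section 3; what it buys is brevity and a cleaner reduction to the general machinery, at the cost of leaning on the ``straightforward identification'' in the proof of Corollary \ref{Liouville4Dmaps}, whereas the paper's explicit computation is self-contained and incidentally exhibits the gradients and the block structure concretely. Your closing remarks on the coordinate permutation needed to reach the block form $\diag(J_2,\ldots,J_2)$ and on functional independence via disjoint gradient supports are exactly the bookkeeping points that make the reduction legitimate, so no gap remains.
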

\begin{proof}
\begin{enumerate}
	\item This is due to proposition (\ref{MonodInv}).
	\item We consider the following $4n\times4n$ Poisson matrix
\begin{equation}\label{PoissonTnnNLS}
J_{2n}=\diag(\underbrace{J_2,\ldots,J_2}_{n}).
\end{equation}
The rank of this matrix is $\rank(J_{2n})=n\cdot\rank(J_2)=4n$. The gradients of the invariants are given by
\begin{subequations}
\begin{align}
\nabla I_{1,i}&=x_{2,i}\textbf{e}_i+x_{1,i}\textbf{e}_{i+1}+y_{2,i}\textbf{e}_{i+n}+y_{1,i}\textbf{e}_{i+n+1}, \\
\nabla I_{2,i}&=\left[x_{2,i}(b_i+y_{i,1}y_{2,i})+y_{2,i}\right]\textbf{e}_{i}+\left[x_{1,i}(b_i+y_{i,1}y_{2,i})+y_{1,i}\right]\textbf{e}_{i+1} \\
& \left[y_{2,i}(a_i+x_{i,1}x_{2,i})+x_{2,i}\right]\textbf{e}_{i+n}+\left[y_{1,i}(a_i+x_{i,1}x_{2,i})+x_{1,i}\right]\textbf{e}_{i+n+1}.\nonumber
\end{align}
\end{subequations}
where $\textbf{e}_i$ are the orthogonal unit vectors. Obviously,
\begin{equation}
\nabla I_{1,i}\cdot J_{2n}\cdot \left(\nabla I_{2,j}\right)^t=\nabla I_{2,i}\cdot J_{2n}\cdot \left(\nabla I_{2,j}\right)^t=\nabla I_{1,i}\cdot J_{2n}\cdot \left(\nabla I_{2,j}\right)^t=0, \quad \text{for} \quad i,j=1,\ldots,n, \quad i\neq j,
\end{equation}
since $\textbf{e}_i\cdot\textbf{e}_j=\delta_{ij}$. Moreover, one can easily verify that
\begin{equation}
\nabla I_{1,i}\cdot J_{2n}\cdot \left(\nabla I_{2,i}\right)^t=0, \qquad i=1,\ldots,n.
\end{equation}
Therefore, we have $2n$ functionally independent invariants which are in involution with respect to $J_{2n}$. Thus, the map (\ref{MapTnnAY}) is completely integrable.
\end{enumerate}
\end{proof}

\subsection{The derivative Nonlinear Schr\"odinger equation}
In the case of DNLS equation, one can derive the following Yang-Baxter map (see \cite{Sokor-Sasha})
\begin{equation} \label{YB-affine}
\left(\textbf{x},\textbf{y}\right)\overset{Y_{a,b}}{\longrightarrow }\left(y_1+\frac{a-b }{a-x_1y_2}x_1, \frac{a-x_1 y_2}{b-x_1 y_2}y_2, \frac{b-x_1 y_2}{a-x_1 y_2}x_1,x_2+\frac{b-a}{b-x_1 y_2}y_2\right).
\end{equation}
This map has the strong Lax representation (\ref{lax_eq_NLS}) where $L$ is given by 
\begin{equation} \label{Darboux_Affine}
  L(\textbf{x};k;\lambda)=\lambda^{2}\left(\begin{matrix}
    k+x_1x_2 & 0\\
    0 & 0\end{matrix}\right)+\lambda\left(\begin{matrix}
    0 & x_1\\
    x_2 & 0\end{matrix}\right)+\left(\begin{matrix}
    1 & 0\\
    0 & 1\end{matrix}\right), \qquad \textbf{x}:=(x_1,x_2)
\end{equation} 
This is a Darboux matrix correspoding to the DNLS equation (\ref{DNLS}) (see \cite{SPS}).

The map (\ref{YB-affine}) has the following invariants
\begin{equation}\label{invAffine}
I_1(\textbf{x},\textbf{y})=(a+x_1x_2)(b+y_1y_2), \qquad I_2(\textbf{x},\textbf{y})=(x_1+y_1)(x_2+y_2).
\end{equation}
which are derived from the trace of the monodromy matrix $M(\textbf{y};b,\lambda)M(\textbf{x};a,\lambda)$. Moreover, the map (\ref{YB_NLS}) is completely integrable, with the corresponding Poisson matrix given by the $4\times 4$ matrix $J_2$ in Lemma \ref{MatrixFormJ2}, $\rank(J_2)=2$. The invariant $I_2$ is product of two Casimir functions $C_1=x_1+y_1$ and $C_2=x_2+y_2$.

The corresponding $n$-periodic transfer map is given by
\begin{subequations}\label{MapT2nDNLS}
\begin{align} 
    &T_n: (\textbf{x}_1, \ldots, \textbf{x}_{n},y_1, \ldots, \textbf{y}_{n}) \mapsto  (\textbf{u}_1, \ldots, \textbf{u}_{n},\textbf{v}_1, \ldots, \textbf{v}_{n}), \quad \text{where}&\label{MapT2nDNLS-1}\\
\intertext{where $\textbf{x}_i=(x_{1,i},x_{2,i})$, $\textbf{y}_i=(y_{1,i},y_{2,i})$, $\textbf{u}_i=(u_{1,i},u_{2,i})$ and $\textbf{v}_i=(v_{1,i},v_{2,i})$, $i=1,\ldots,2n$, and $\textbf{u}_i$, $\textbf{v}_i$ are given by}
&u_{1,i}=y_{1,i}+\frac{a_i-b_i }{a-x_{1,i}y_{2,i}}x_1, \qquad u_{2,i}=\frac{a_i-x_{1,i} y_{2,i}}{b_i-x_{1,i} y_{2,i}}y_{2,i},&  \\
&v_{1,i}=\frac{b_i-x_{1,i} y_{2,i}}{a_i-x_{1,i} y_{2,i}}x_{1,i}, \qquad \qquad v_{2,i}=x_{2,i}+\frac{b_i-a_i}{b_i-x_{1,i} y_{2,i}}y_{2,i}.& \label{MapT2nDNLS-2}
\end{align}
\end{subequations}

For its integrability we have the following.

\begin{proposition}\label{DNLST2nInt}
For the $4n$-dimensional map (\ref{MapT2nDNLS}) we have the following.
\begin{enumerate}
	\item  It admits the following invariants
	\begin{equation}\label{DNLST2ninv}
	I_{1,i}=I_1(\textbf{x}_i,\textbf{y}_i),\qquad C_{1,i}=\textbf{x}_{1,i}+\textbf{y}_{1,i}, \qquad \qquad C_{2,i}=\textbf{x}_{2,i}+\textbf{y}_{2,i}, \qquad i=1,\ldots,2n,
	\end{equation}
	where $I_1$ is given by (\ref{invAffine}).
	\item It is completely integrable.
\end{enumerate}
\end{proposition}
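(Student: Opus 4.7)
The plan is to reduce both claims directly to the general results of Section~3, rather than carrying out any fresh calculation. The four-dimensional map \eqref{YB-affine} has already been shown to be Liouville integrable with the rank-$2$ Poisson matrix $J_2$ of the form \eqref{J2Lemma}, the invariant $I_1$ arising from $\tr M_1$, and the two linear Casimirs $C_1=x_1+y_1$, $C_2=x_2+y_2$. This is precisely the input required by Corollary \ref{Liouville4Dmaps}(2), and the argument for the present proposition will amount to unwinding that corollary in the DNLS case.

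For part $1$, Proposition \ref{MonodInv} applied to $Y_{a,b}$ and the invariant $I_1$ of \eqref{invAffine} immediately produces the site-indexed invariants $I_{1,i}=I_1(\textbf{x}_i,\textbf{y}_i)$ of $T_n$. For the linear quantities $C_{1,i}$ and $C_{2,i}$ I would verify invariance directly from \eqref{MapT2nDNLS-2}: combining the fractions one finds
\begin{equation*}
u_{1,i}+v_{1,i}=y_{1,i}+\frac{(a_i-b_i)+(b_i-x_{1,i}y_{2,i})}{a_i-x_{1,i}y_{2,i}}\,x_{1,i}=x_{1,i}+y_{1,i},
\end{equation*}
and analogously $u_{2,i}+v_{2,i}=x_{2,i}+y_{2,i}$. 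Since the local map at site $i$ preserves these sums, so does $T_n$.

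For part $2$, the cleanest route is an appeal to Corollary \ref{Liouville4Dmaps}(2), identifying $Y_{a,b}$ with the four-dimensional map of the corollary. Unwinding the proof, I would equip the $4n$-dimensional phase space with the block-structured Poisson matrix $J_{2n}$ of the form \eqref{J2nGeneral}; its rank is $2n$, and its $2n$ Casimirs are precisely $C_{1,i}$ and $C_{2,i}$. Lemma \ref{MatrixForm} then guarantees that the $n$ invariants $I_{1,i}$, each supported on a disjoint block of four variables, are pairwise in involution with respect to $J_{2n}$, and their functional independence is immediate from the disjoint supports of their gradients. Combined with the $2n$ Casimirs this supplies the complete Liouville data on a $4n$-dimensional phase space.

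I do not anticipate any serious obstacle. The only substantive verification beyond Proposition \ref{MonodInv}, Lemma \ref{MatrixForm}, and Corollary \ref{Liouville4Dmaps} is the one-line cancellation that yields $C_{j,i}\circ T_n=C_{j,i}$; everything else is bookkeeping that reproduces the rank-$2$ Poisson structure of the four-dimensional map block by block.
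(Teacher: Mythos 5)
Your proposal is correct and follows essentially the same route as the paper: part~1 via Proposition~\ref{MonodInv} and part~2 via Corollary~\ref{Liouville4Dmaps} with the block Poisson matrix (\ref{J2nGeneral}) and Lemma~\ref{MatrixForm}. Your explicit one-line cancellation verifying $u_{1,i}+v_{1,i}=x_{1,i}+y_{1,i}$ (and its analogue) is a detail the paper leaves implicit, but it does not change the argument.
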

\begin{proof}
\begin{enumerate}
	\item Since the invariants of the YB map (\ref{YB-affine}) are given by $I_1$ and $C_i=x_i+y_i$, $i=1,2$, the invariants of the map $T_{n}$ are given by (\ref{DNLST2ninv}), due to proposition (\ref{MonodInv}).
	\item The map $T_{n}$ is completely integrable, with the Poisson matrix given by (\ref{J2nGeneral}), due to corollary \ref{Liouville4Dmaps}.
\end{enumerate}
\end{proof}

\section{Conclusions}
In this paper we showed how we can derive invariants for the map $T_n$, $n\in\field{N}$, using the monodromy matrix of the YB map as a generating function. Moreover, we discussed the Liouville integrability of the map $T_{2n}$. Finally, we showed that all the transfer maps associated with the Adler-Yamilov map and the YB map corresponding to the DNLS equation are completely integrable.

The results obtained in this paper can be developed in the following ways:

\begin{enumerate}
	\item study the transfer dynamics of involutive YB maps;
	\item study the transfer dynamics of the corresponding Entwining Yang-Baxter maps \cite{Kouloukas4};
	\item extend all the results for the extensions of these Yang-Baxter maps on Grassmann algebras;
\end{enumerate}

Specifically, for 1), it is clear that, considering the initial value problem on the two-dimensional lattice, as in Fig. \ref{fig-ivp}, the evolution on the lattice will be trivial for involutive YB maps. However, for involutive maps, following \cite{Kouloukas4}, one could consider $n$-periodic initial value problems, with the initial values placed on the edges of the staircase \cite{PNC} as in Fig. \ref{staircase}. This gives rise to the following transfer map
\begin{equation}
T_n^1(x_1,\ldots,x_n,y_1,\ldots,y_n)\mapsto (x_1^{(1)},\ldots,x_n^{(1)},y_2^{(1)},\ldots,y_n^{(1)},y_1^{(1)}),
\end{equation}
and the $k$-transfer map
\begin{equation}
T_n^k(x_1,\ldots,x_n,y_1,\ldots,y_n)\mapsto (x_1^{(k)},\ldots,x_n^{(k)},y_{r+1}^{(k)},\ldots,y_n^{(1)},y_1^{(k)},\ldots,y_r^{(k)}),
\end{equation}
where $r\equiv k\mod n$.

\begin{figure}[ht]
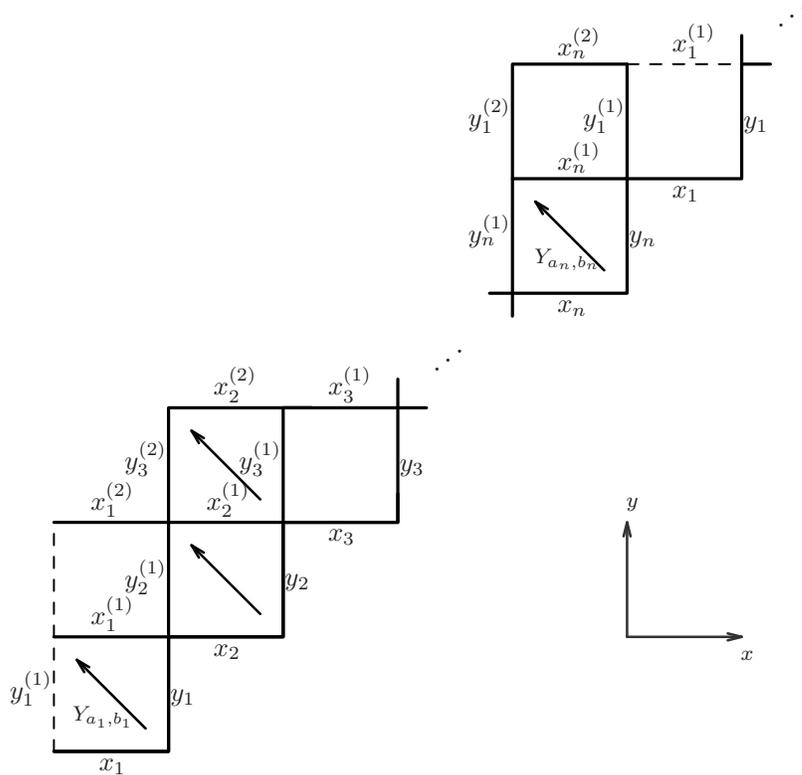

\centertexdraw{
%%%%x-y axes
\setunitscale 0.6
\move(3 -1)  \linewd 0.02 \setgray 0.2 \arrowheadtype t:V \arrowheadsize l:.12 w:.06 \avec(3 0) 
\move(3 -1) \arrowheadtype t:V  \avec(4 -1)
\setgray 0.0 
\move(-1.2 -1.8)  \arrowheadtype t:V \arrowheadsize l:.12 w:.06 \avec(-1.8 -1.2) 
\move(-0.2 -0.8)  \arrowheadtype t:V \arrowheadsize l:.12 w:.06 \avec(-0.8 -0.2) 
\move(2.8 2.2)  \arrowheadtype t:V \arrowheadsize l:.12 w:.06 \avec(2.2 2.8) 
\move(-0.2 0.2)  \arrowheadtype t:V \arrowheadsize l:.12 w:.06 \avec(-0.8 0.8) 
%\arrowheadsize l:.20 w:.10
%\move(-.5 .5) \linewd 0.02 \setgray 0.4 \arrowheadtype t:F \avec(-1.5 1.5) 
%\move(1.5 -.5) \linewd 0.02 \setgray 0.4 \arrowheadtype t:F \avec(2.5 -1.5) 
\setgray 0.0
\linewd 0.03 \move (-2 -2) \lvec (-1 -2) \lvec (-1 -1) \lvec (0 -1) \lvec (0 0) \lvec (1 0)  \lvec (1 1) \lvec(1 1.25) \move (1.8 2) \lvec (2 2) \lvec (3 2) \lvec (3 3) \lvec (4 3) \lvec (4 4)
\move (-2 -1) \lvec (-1 -1) \lvec (-1 0) \lvec (0 0)  \lvec (0 1) \lvec (1 1) \lvec (1.25 1) \move (2 1.8) \lvec (2 2) \lvec (2 3) \lvec (3 3) \lvec (3 4) 
\move (-2 -2) \lvec (-1 -2) \lvec (-1 -1) \lvec (0 -1) \lvec (0 0) \lvec (1 0) \lvec (1 0.25)
\move (2 3) \lvec (2 4) \lvec (3 4) 
\move (-2 0) \lvec (-1 0) \lvec (-1 1) \lvec (0 1) \lvec (0.25 1)
\move (4 4) \lvec (4.25 4)
\move (4 4) \lvec (4 4.25)

\linewd 0.015 \lpatt (.1 .1 ) \move (-2 -2) \lvec (-2 -1) \lvec(-2 0) \move (3 4) \lvec(4 4)
%%%Axes labels
\htext (4 -1.2) {\scriptsize{$x$}}
\htext (3 .1) {\scriptsize{$y$}}
\htext (1.3 1.3) {$\iddots$}
\htext (4.3 4.3) {$\iddots$}
\htext (-1.83 -1.79) {\scriptsize{$Y_{a_1,b_1}$}}
\htext (2.2 2.2) {\scriptsize{$Y_{a_n,b_n}$}}

%%%Initial values labels
\htext (-1.6 -2.2) {\small{$x_1$}} \htext (-0.6 -1.2) {\small{$x_2$}} \htext (0.4 -0.2) {\small{$x_3$}} \htext (2.4 1.8) {\small{$x_n$}} \htext (3.4 2.8) {\small{$x_1$}}
\htext (-0.98 -1.6) {\small{$y_1$}} \htext (0.02 -0.6) {\small{$y_2$}} \htext (1.02 0.4) {\small{$y_3$}} \htext (3.02 2.4) {\small{$y_n$}} \htext (4.02 3.4) {\small{$y_1$}}

\htext (-1.67 -0.95) {\small{$x_1^{(1)}$}} \htext (-0.66 0.05) {\small{$x_2^{(1)}$}} \htext (0.4 1.05) {\small{$x_3^{(1)}$}} \htext (2.4 3.05) {\small{$x_n^{(1)}$}} \htext (3.4 4.05) {\small{$x_1^{(1)}$}}
\htext (-2.38 -1.6) {\small{$y_1^{(1)}$}} \htext (-1.38 -0.63) {\small{$y_2^{(1)}$}} \htext (-0.38 0.4) {\small{$y_3^{(1)}$}}  \htext (1.62 2.4) {\small{$y_n^{(1)}$}} \htext (2.62 3.4) {\small{$y_1^{(1)}$}} 

\htext (-1.67 0.05) {\small{$x_1^{(2)}$}} \htext (-0.6 1.05) {\small{$x_2^{(2)}$}}  \htext (2.4 4.05) {\small{$x_n^{(2)}$}}
 \htext (-1.38 0.4) {\small{$y_3^{(2)}$}}  \htext (1.62 3.4) {\small{$y_1^{(2)}$}}
 }
\caption{{\em{$n$-periodic initial value problem}}}\label{staircase}
\end{figure}

From the way the transfer maps are defined in this case, it follows that the dynamics of involutive maps as, for instance, Adler's map (\ref{Adler_map}), will not be trivial. However, the transfer maps in this case lose the Yang-Baxter property.

For 2), one could study the dynamics of the Yang-Baxter maps which result from the matrix refactorisation problem of a Lax pair, $(M_1,M_2)$ (see \cite{Kouloukas4}), where $M_1$ and $M_2$ are Darboux matrices corresponding to different partial differential equations of NLS type. 

Regarding 3), one could extend all the results obtained in Section 3 on Grassmann algebras \cite{GSS}.

\section*{Acknowledgements}
I have the pleasure to thank T. Kouloukas, G. Grahovski and A. Mikhailov for numerous discussions and especially D. Tsoubelis for making available his computer facilities. I would also like to thank the University of Leeds for the William Wright Smith scholarship and J. Crowther for the scholarship-contribution to fees. Finally, I acknowledge six month financial support from the School of Mathematics, the University of Leeds.

\end{document}